\newcommand{\lp}{\left(}  
\newcommand{\rp}{\right)}
\theoremstyle{plain}
\newtheorem{theorem}{Theorem}[section]
\newtheorem{lemma}[theorem]{Lemma}
\newtheorem{proposition}[theorem]{Proposition}
\newcommand{\indep}{\perp \!\!\! \perp}
\theoremstyle{definition}
\newtheorem{example}[theorem]{Example}
\theoremstyle{remark}
\begin{document}


\title{Verifiable identification condition for nonignorable nonresponse data with categorical instrumental variables}

\author{
\name{Kenji Beppu\textsuperscript{a}\thanks{Email: k.morikawa.es@osaka-u.ac.jp} and Kosuke Morikawa\textsuperscript{a}}
\affil{\textsuperscript{a}Graduate School of Engineering Science, Osaka University, Osaka, Japan}
}

\maketitle

\begin{abstract}
We consider a model identification problem in which an outcome variable contains nonignorable missing values. Statistical inference requires a guarantee of the model identifiability to obtain estimators enjoying theoretically reasonable properties such as consistency and asymptotic normality.
Recently, instrumental or shadow variables, combined with the completeness condition in the outcome model, have been highlighted to make a model identifiable. In this paper, we elucidate the relationship between the completeness condition and model identifiability when the instrumental variable is categorical. We first show that when both the outcome and instrumental variables are categorical, the two conditions are equivalent. However, when one of the outcome and instrumental variables is continuous, the completeness condition may not necessarily hold, even for simple models. Consequently, we provide a sufficient condition that guarantees the identifiability of models exhibiting a monotone-likelihood property, a condition particularly useful in instances where establishing the completeness condition poses significant challenges.
Using observed data, we demonstrate that the proposed conditions are easy to check for many practical models and outline their usefulness in numerical experiments and real data analysis.
\end{abstract}

\begin{keywords}
missing not at random; nonignorable missingness; identification; instrumental variable; exponential family
\end{keywords}

\section{Introduction}
There has been a rapidly growing movement to utilize all the available data that may explicitly, even implicitly, contain missing values, such as causal inference \citep{ imbens2015causal} and data integration \citep{yang2020statistical, hu2022paradoxes}.
For such datasets, appropriate analysis of missing data is indispensable to correct selection bias owing to the missingness. In recent years, analysis of missing data under missing at random (MAR) assumption \citep{little2019statistical} has gradually matured \citep{robins1994estimation, kim2021statistical}. Although model identifiability is one of the most fundamental conditions in constructing the asymptotic theory, removing the MAR assumption makes statistical inference drastically difficult, especially in model identification \citep{miao2016identifiability}.
Estimation with unidentifiable models may provide multiple solutions that have exactly the same model fitting. Several researchers have considered giving sufficient conditions for the  model identification under missing not at random (MNAR).

Constructing observed likelihood consists of two distributions: (R) response mechanism and (O) outcome distribution \citep{kim2021statistical}. \cite{miao2016identifiability} considered identification condition with Logistic, Probit, and Robit (cumulative distribution function of $t$-distribution) models for (R) and normal and $t$ (mixture) distributions for (O).
\cite{cui2017identifiability} assumed Logistic, Probit, and cLog-log models for (R) and the generalized linear models for (O). These studies depend heavily on the model specification of both (R) and (O).
\cite{wang2014instrumental} introduced a covariate called instrument or shadow variable and demonstrated that the use of the instrument could considerably relax conditions on (R) and (O). For example, (O) requires only the monotone-likelihood property, which includes a variety of models, such as the generalized linear model.
\cite{tang2003analysis} and \cite{miao2018identification} derived conditions for model identifiability without postulating any assumptions on (R) with the help of the instrument. \cite{miao2019identification} further relaxed the assumption under an assumption referred to as the completeness condition on (R) \citep{d2010new, d2011completeness}.
For example, the generalized linear model with continuous covariates satisfies the completeness condition. To the best of our knowledge, this combination of an instrument on (R) and completeness on (O) is the most general condition for model identification and has been accepted in numerous studies \citep{zhao2022versatile, yang2019causal}.

Generally, assumptions on (O) rely on the distribution of the complete data, which is untestable from observed data. Recently, modeling (O') the observed or respondents’ outcome model, instead of (O), has been used to relax the subjective assumption \citep{miao2019identification, riddles2016propensity}. However, the observed likelihood with (R) and (O') involves an integration that makes the identification problem intractable.
\cite{morikawa2021semiparametric} and \cite{beppu2021} established that the integration can be computed explicitly with Logistic models for (R) and generalized linear models for (O') and derived identification condition.
For general response mechanisms and  respondents’ outcome distributions, the model identification remains an open question. Furthermore, when the instrument is categorical such as smoking history and sex, the completeness condition is not available. For example, \cite{ibrahim2001using} considered a study on the mental health of children in Connecticut and used the parents' report of the psychopathology of the child as the binary instrument.

In this paper, we consider an identification problem with an instrument for (R) and (O') that satisfies the monotone-likelihood ratio property.
Note that although our model setup is similar to \cite{wang2014instrumental}, we can check the validity of (O') with observed data, for example, by using the information criteria such as AIC and BIC.
Furthermore, we can use semiparametric/nonparametric methods for modeling both (O') and (R).

The rest of this paper is organized as follows. Section \ref{section2} introduces the notation and defines model identifiability.
Section \ref{section3} derives the proposed identification condition. We demonstrate the effects of identifiability via a limited numerical study in Section \ref{section4}.
Moreover, application to real data is presented in Section \ref{section5}. Finally, concluding remarks are summarized in Section \ref{section6}.
All the technical proofs are relegated to the Appendix.

\section{Basic setup} \label{section2}

\subsection{ Observed likelihood }

Let $\{ \bm{x}_i, y_i, \delta_i\}_{i=1}^n$ be independent and identically distributed samples from a distribution of $(\bm{x},y,\delta)$, where $\bm{x}$ is a fully observed covariate vector, $y$ is an outcome variable subject to missingness, and $\delta$ is a response indicator of $y$ being $1(0)$ if $y$ is observed (missing).
We use the generic notation $p(\cdot)$ and $p(\cdot\mid \cdot)$ for the marginal density and conditional density, respectively. For example, $p(\bm{x})$ is the marginal density of $\bm{x}$, and $p(y\mid \bm{x})$ is the conditional density of $y$ given $\bm{x}$.
We model the MNAR response mechanism $P(\delta=1\mid \bm{x},y)$ and consider its identification.
The observed likelihood is defined as 
\begin{align}
    \prod_{i: \delta_i=1} P(\delta_i=1\mid y_i,\bm{x}_i)p(y_i\mid \bm{x}_i) \prod_{i: \delta_i=0} \int \left\{ 1-P(\delta_i=1\mid y, \bm{x}_i) \right\} p(y\mid \bm{x}_i)dy. \label{1}
\end{align}
We say that this model is identifiable if parameters in (\ref{1}) are identified, which is equivalent to parameters in $P(\delta=1\mid y,\bm{x})p(y\mid \bm{x})$ being identified. This identification condition is essential even for semiparametric models such as an estimator defined by moment conditions \citep{morikawa2021semiparametric}. However, simple models can be easily unidentifiable.
For example, Example 1 in \cite{wang2014instrumental} presented an unidentifiable model when the outcome model is normal, and the response mechanism is a Logistic model.

There is an alternative way to express the relationship between $y$ and $\bm{x}$.
A disadvantage of modeling $p(y\mid \bm{x})$ is its subjective assumption on the distribution of complete data, not of observed data.
In other words, if we made assumptions about $p(y\mid \bm{x})$ and ensured its identifiability, we could not verify the assumptions using the observed data.
By contrast, this issue can be overcome by modeling $p(y\mid \bm{x}, \delta=1)$ because $p(y\mid \bm{x}, \delta=1)$ is the outcome model for the observed data, and we can check its validity using ordinal information criteria such as AIC and BIC.
Therefore, we model $p(y\mid \bm{x}, \delta=1)$ and consider the identification condition in Section \ref{section3}. Hereafter, we assume two parametric models $p(y\mid \bm{x},\delta=1;\bm{\gamma})$ and $P(\delta=1\mid \bm{x},y;\bm{\phi})$, where $\bm{\gamma}$ and $\bm{\phi}$ are parameters of the outcome and response models, respectively.  Although our method requires two parametric models, the class of identifiable models is very large. For example, it can include semiparametric outcome models for $p(y\mid \bm{x}, \delta=1;\bm{\gamma})$ and general response models $P(\delta=1\mid \bm{x},y;\bm{\phi})$ other than Logistic models, as discussed in Example \ref{satisfying_C3}.

\subsection{ Estimation } \label{estimation}

We present a procedure of parameter estimation based on parametric models of $p(y\mid \bm{x}, \delta=1;\bm{\gamma})$ and $P(\delta=1\mid \bm{x},y;\bm{\phi})$.
Let $\hat{\bm{\gamma}}$ be the maximum likelihood estimator of $\bm{\gamma}$. The observed likelihood (\ref{1}) yields to the mean score equation for $\bm{\phi}$ \citep{kim2021statistical}:
\begin{align*}
    \sum_{i=1}^{n} \left\{ \delta_i \frac{\partial \log \pi(\bm{x}_i,y_i;\bm{\phi})}{\bm{\phi}} - (1-\delta_i) \frac{ \int \partial \pi(\bm{x}_i,y;\bm{\phi})/\partial\bm{\phi} \cdot p(y\mid \bm{x}) dy }{ \int \{ 1- \pi(\bm{x}_i,y;\bm{\phi})\} p(y\mid \bm{x}) dy } \right\}=0
\end{align*}
where $\pi(\bm{x},y;\bm{\phi})=P(\delta=1\mid \bm{x},y;\bm{\phi})$. By using Bayes' formula $ p(y\mid \bm{x}) \propto p(y\mid \bm{x}, \delta=1)/\pi(\bm{x},y;\bm{\phi})$, the mean score can be written as 
\begin{align*}
    \sum_{i=1}^{n} \left\{ \delta_i s_1( \bm{x}_i, y_i; \bm{\phi} ) + (1-\delta_i) s_0( \bm{x}_i;\bm{\phi} ) \right\}=0,
\end{align*}
where
\begin{align*}
    s_1( \bm{x}, y; \bm{\phi} )= \frac{ \partial \log \pi(\bm{x},y;\bm{\phi}) }{ \partial\bm{\phi} }, \ s_0( \bm{x};\bm{\phi} )= - \frac{ \int s_1( \bm{x}, y; \bm{\phi} ) p(y\mid \bm{x}, \delta=1) dy }{ \int \left\{  1/\pi(\bm{x},y;\bm{\phi}) -1 \right\} p(y\mid \bm{x}, \delta=1) dy  }.
\end{align*}
To compute the two integrations in $s_0(\cdot)$, we can use the fractional imputation \citep{kim2011parametric}.
As described in \cite{riddles2016propensity}, the EM algorithm is also applicable.

\section{Identifiability} \label{section3}

\subsection{Definition of identification}
Recall that the identification condition in (\ref{1}) is for parameters in $P(\delta=1\mid y,\bm{x}) p(y\mid \bm{x})$.
As seen in Section \ref{estimation}, the conditional density $p(y\mid \bm{x})$ is represented by $p(y\mid \bm{x}, \delta=1;\bm{\gamma})$ and $P(\delta=1\mid \bm{x},y;\bm{\alpha},\bm{\phi})$ by Bayes' formula. Thus, using the formula, identification with these models changes to parameters in $\varphi(y,\bm{x};\bm{\phi},\bm{\gamma})$, where
\begin{align}
   \varphi(y,\bm{x};\bm{\phi},\bm{\gamma})= \frac{ p(y\mid \bm{x}, \delta=1;\bm{\gamma}) }{\int p(y\mid \bm{x}, \delta=1;\bm{\gamma}) / \pi(\bm{x},y;\bm{\phi}) dy }. \label{integral}
\end{align}
Strictly speaking, the identification condition is $\varphi(y,\bm{x};\bm{\phi},\bm{\gamma})=\varphi(y,\bm{x};\bm{\phi}^{\prime},\bm{\gamma}^{\prime})$ with probability $1$ implies that $(\bm{\phi}^{\top},\bm{\gamma}^{\top})=({\bm{\phi}^{\prime}}^{\top},{\bm{\gamma}^{\prime}}^{\top})$.
Generally, the integral in the denominator of (\ref{integral}) does not have the closed form, which makes deriving a sufficient condition for the identifiability quite challenging.
\cite{morikawa2021semiparametric} identified a combination of Logistic models and normal distributions for response and outcome models has a closed form of the integration and derived a sufficient condition for the model identifiability.
\cite{beppu2021} extended the model to a case where the outcome model belongs to the exponential family while the response model is still a Logistic model.
However, when the response mechanism is general, simple outcome models such as normal distribution can be unidentifiable.

\begin{example} \label{unidentifiable model}
Suppose that the respondents' outcome model is $y\mid (\delta=1,x)\sim N(\gamma_0+\gamma_1x,1)$, and the response model is $P(\delta=1\mid x,y)=\Psi(\alpha_0+\alpha_1x+\beta y)$, where $\Psi$ is a known distribution function such that the integration in (\ref{integral}) exists; then, this model is unidentifiable. For example, different parametrization 
$(\alpha_0,\alpha_1,\beta,\gamma_0,\gamma_1)=(0,1,1,0,1)$, $(\alpha_0^{\prime},\alpha_1^{\prime},\beta^{\prime},\gamma_0^{\prime},\gamma_1^{\prime})=(0,3,-1,0,1)$ yields the same value of the observed likelihood. 
\end{example}

Recently, widely applicable sufficient conditions have been proposed. Assume that a covariate $\bm{x}$ has two components, $\bm{x}=(\bm{u}^{\top},\bm{z}^{\top})^{\top}$, such that
\begin{itemize}
\item [(C1)]
    $\bm{z} \indep \delta \mid (\bm{u},y)$ and $  \bm{z} \not\indep y\mid (\delta=1,\bm{u}).$
\end{itemize}
The covariate $\bm{z}$ is called an instrument \citep{d2010new} or a shadow variable \citep{miao2016varieties}. \cite{miao2019identification} derived sufficient conditions for model identifiability by combining the instrument and the completeness condition:
\begin{itemize}
    \item[(C2)] For all square-integrable function $h(\bm{u},y)$, $E[h(\bm{u},y)\mid \delta=1,\bm{u},\bm{z}]=0$ almost surely implies $h(\bm{u},y)=0$ almost surely.
\end{itemize}

\begin{lemma}[Identification condition by \cite{miao2019identification}]\label{miaoresult}
Under the conditions (C1) and (C2), the joint distribution $p(y,\bm{u},\bm{z},\delta)$ is identifiable.
\end{lemma}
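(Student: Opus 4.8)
The plan is to reduce identifiability of the full joint $p(y,\bm{u},\bm{z},\delta)$ to identifiability of the missingness odds function $w(\bm{u},y)=P(\delta=0\mid\bm{u},y)/P(\delta=1\mid\bm{u},y)$, and then to pin $w$ down through a Fredholm integral equation of the first kind whose unique solution is guaranteed by completeness (C2). The observed data already determine $p(y,\bm{u},\bm{z}\mid\delta=1)$, the covariate law $p(\bm{u},\bm{z}\mid\delta=0)$ among nonrespondents, and $P(\delta=1)$; what is unknown is the tilt in $y$ between respondents and nonrespondents, which is exactly $w$. So it suffices to show that $w$ is the only odds function consistent with the observed law.

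First I would use the instrument condition (C1), $\bm{z}\indep\delta\mid(\bm{u},y)$, to write $P(\delta=1\mid\bm{u},\bm{z},y)=P(\delta=1\mid\bm{u},y)$, so that $w(\bm{u},y)$ does not depend on $\bm{z}$. Applying Bayes' rule to the respondents' law gives $p(y,\bm{u},\bm{z})=P(\delta=1)\,p(y,\bm{u},\bm{z}\mid\delta=1)/P(\delta=1\mid\bm{u},y)$, and substituting this into $p(\bm{u},\bm{z}\mid\delta=0)=P(\delta=0)^{-1}\int P(\delta=0\mid\bm{u},y)\,p(y,\bm{u},\bm{z})\,dy$, then dividing by $p(\bm{u},\bm{z}\mid\delta=1)$, yields
\begin{equation*}
\frac{p(\bm{u},\bm{z}\mid\delta=0)}{p(\bm{u},\bm{z}\mid\delta=1)}=\frac{P(\delta=1)}{P(\delta=0)}\,E\!\left[w(\bm{u},y)\mid\delta=1,\bm{u},\bm{z}\right].
\end{equation*}
The left-hand side is an observable ratio of covariate densities across response status; condition (C3) guarantees that its $\bm{z}$-dependent factor $p(\bm{z}\mid\delta=0,\bm{u})/p(\bm{z}\mid\delta=1,\bm{u})$ is identifiable, so the entire right-hand side above is a known function of $(\bm{u},\bm{z})$.

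Next I would invoke completeness. Suppose two candidate odds functions $w$ and $w'$ are both consistent with the observed law; then both satisfy the displayed equation with the same known left-hand side, whence $E[\,w(\bm{u},y)-w'(\bm{u},y)\mid\delta=1,\bm{u},\bm{z}\,]=0$ almost surely. Taking $h(\bm{u},y)=w(\bm{u},y)-w'(\bm{u},y)$ in (C2) forces $h=0$ almost surely, so $w$ is identified, and hence so is the propensity $P(\delta=1\mid\bm{u},y)=\{1+w(\bm{u},y)\}^{-1}$. The remaining pieces then follow mechanically: $p(y\mid\bm{u},\bm{z})=p(y\mid\bm{u},\bm{z},\delta=1)\,P(\delta=1\mid\bm{u},\bm{z})/P(\delta=1\mid\bm{u},y)$ is identified, and combining it with the identified propensity and the observable marginal $p(\bm{u},\bm{z})$ reconstructs $p(y,\bm{u},\bm{z},\delta)$ for both $\delta=0$ and $\delta=1$.

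The crux, and the step I expect to carry the real weight, is the derivation of the integral equation rather than its solution: because $y$ is never observed among nonrespondents, $p(y\mid\bm{u},\bm{z},\delta=0)$ is inaccessible, so the argument must integrate it out and re-express the resulting observable (the covariate-density tilt between response groups) as a conditional expectation, over the respondents' outcome law, of the unknown odds $w(\bm{u},y)$. It is exactly here that (C1) is essential—it is what removes $\bm{z}$ from $w$ and thereby makes $\bm{z}$ an instrument supplying the variation that completeness needs—while (C2) then converts a single scalar identity, indexed by the instrument $\bm{z}$, into pointwise identification of the whole function $w$. The technical points requiring care are the nondegeneracy in (C1) (that $\bm{z}$ genuinely moves $y$ given respondents, so the conditional expectation operator is not trivial) and the integrability needed to legitimately apply (C2) to $h=w-w'$.
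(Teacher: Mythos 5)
Your proof is correct, and it is essentially the canonical argument behind this result: reduce identification of the joint law to identification of the missingness odds $w(\bm{u},y)$, derive the integral equation expressing the observable covariate-density ratio across response groups as $\{P(\delta=1)/P(\delta=0)\}\,E[w(\bm{u},y)\mid\delta=1,\bm{u},\bm{z}]$ via (C1) and Bayes' rule, and let completeness (C2) applied to $h=w-w'$ force uniqueness, with (C3) handling the left-hand side and the remaining pieces recovered mechanically. Note that the paper itself gives no proof of this lemma — it is quoted directly from Miao et al.\ (2019) — so the relevant comparison is with that reference, whose proof proceeds exactly along your lines; your two self-flagged caveats (square-integrability of $w-w'$ before invoking (C2), and the observation that the relevance half of (C1) is not what drives the argument, since completeness subsumes it) are the right ones to flag and do not affect correctness.
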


Although the completeness condition is useful and applicable for general models, a simple model with a categorical instrument does not hold the completeness condition.

\begin{example}[Violating completeness with categorical instrument]\label{counter example}
Suppose $y\mid (\delta=1, u,z)$ follows the normal distribution $N(u+z, 1)$, and an instrument $z$ is binary taking $0$ or $1$.
This distribution does not satisfy the completeness condition because the conditional expectation $E[h(u,y)\mid \delta=1,u,z]=0$ when $h(u,y)=1+y-u-(y-u)^2$.
\end{example}

A vital implication of Example \ref{counter example} is that instruments are no longer evidence of model identification when the instrument is categorical.
Developing the identification condition for models with discrete instruments is important in applications \citep{ibrahim2001using}.
We separately discuss two cases: (i) both $y$ and $\bm{z}$ are categorical; (ii) respondents’ outcome model has the monotone-likelihood ratio property.

When all variables, $y$ and $\bm{z}$, are categorical, the model can be fully nonparametric. Theorem \ref{full nonpara} demonstrates that, under these conditions, the completeness and identifiability conditions are equivalent. See Appendix 2 in \cite{riddles2016propensity} for the estimation of such fully nonparametric models.

\begin{theorem} \label{full nonpara}
When both $y$ and $\bm{z}$ are categorical, under condition (C1), the joint distribution $p(y,\bm{u},\bm{z},\delta)$ is identifiable if and only if condition (C2) holds.
\end{theorem}

As evidenced in Lemma \ref{miaoresult}, condition (C2) is generally sufficient for model identifiability, but Theorem \ref{full nonpara} also reveals that it is necessary when $y$ and $\bm{z}$ are categorical.

Next, we consider the identification condition for the other case (ii). Let $\mathcal{S}_{y}$ be the support of the random variable $y$. We assume the following four conditions:

\begin{itemize}
    \item[(C3)] The response mechanism is
    \begin{align}
        P(\delta=1\mid y, \bm{x};\bm{\phi})=P(\delta=1\mid y, \bm{u};\bm{\phi})=\Psi\{ h(\bm{u};\bm{\alpha}) + g(\bm{u};\bm{\beta}) m(y) \}, \label{response}
    \end{align}
    where $\bm{\phi}=(\bm{\alpha}^{\top}, \bm{\beta})^{\top}$, 
    $m:\mathcal{S}_y\to\mathbb{R}$ and $\Psi:\mathbb{R}\to(0,1]$ are known continuous strictly monotone functions, and $h(\bm{u};\bm{\alpha})$ and $g(\bm{u};\bm{\beta})$ are known injective functions of $\bm{\alpha}$ and $\bm{\beta}$, respectively.

    \item[(C4)] The density or mass function $p(y\mid \bm{x}, \delta=1;\bm{\gamma})$ is identifiable, and its support does not depend on $\bm{x}$.

    \item[(C5)] For all $\bm{u}\in \mathcal{S}_{\bm{u}}$, there exist $\bm{z}_1$ and $\bm{z}_2$, such that $p(y\mid \bm{u},\bm{z}_1, \delta=1)\neq p(y\mid \bm{u},\bm{z}_2, \delta=1)$, and $p(y\mid \bm{u},\bm{z}_1, \delta=1)/ p(y\mid \bm{u},\bm{z}_2, \delta=1)$ is monotone.

    \item[(C6)] 
    \begin{align*}
        \int \frac{ p(y\mid \bm{x}, \delta=1;\bm{\gamma})}{ \Psi\{ h(\bm{u};\bm{\alpha}) + g(\bm{u};\bm{\beta}) m(y) \}} dy <\infty \ \ \ \mathrm{a.s.}
    \end{align*}
\end{itemize}

The condition (C3) means that the random variable $\bm{z}$ plays a role of an instrument. The condition (C4) is the identifiability of $p(y\mid \bm{x}, \delta=1;\bm{\gamma})$, which is testable from the observed data.
The condition (C5) assumes a monotone-likelihood property on the outcome model, which was also used in \cite{wang2014instrumental} for the complete data. 
The condition (C6) is necessary for (\ref{1}) to be well-defined. It is essentially the same condition as Theorem 3.1 (I1) of \cite{morikawa2021semiparametric}.
This condition is always true when the support of $y$ is finite.
However, it must be carefully verified when $y$ is continuous. See Proposition \ref{cor_normal} below for useful sufficient conditions when the respondents' outcome model is normal distribution.

Under conditions (C3)--(C6), we obtain the desired identification condition.

\begin{theorem} \label{main}
The parameter $(\bm{\phi}^{\top}, \bm{\gamma}^{\top})^{\top}$ is identifiable if the conditions (C1) and (C3)--(C6) hold.
\end{theorem}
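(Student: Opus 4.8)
The plan is to reduce the identifiability of $\varphi$ to a statement purely about the response parameter $\bm{\phi}$ and then to exploit the instrument through a sign-change argument. Suppose $\varphi(y,\bm{x};\bm{\phi},\bm{\gamma})=\varphi(y,\bm{x};\bm{\phi}',\bm{\gamma}')$ with probability one. Since the denominator $D(\bm{x};\bm{\phi},\bm{\gamma})=\int p(y\mid\bm{x},\delta=1;\bm{\gamma})/\pi(\bm{x},y;\bm{\phi})\,dy$, which is finite by (C7), does not depend on $y$, integrating the identity over $y$ forces $D(\bm{x};\bm{\phi},\bm{\gamma})=D(\bm{x};\bm{\phi}',\bm{\gamma}')$, and the identity then gives $p(y\mid\bm{x},\delta=1;\bm{\gamma})=p(y\mid\bm{x},\delta=1;\bm{\gamma}')$. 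By (C5) this yields $\bm{\gamma}=\bm{\gamma}'$, so the problem collapses to showing that
\[
\int \frac{p(y\mid\bm{x},\delta=1;\bm{\gamma})}{\pi(\bm{x},y;\bm{\phi})}\,dy=\int \frac{p(y\mid\bm{x},\delta=1;\bm{\gamma})}{\pi(\bm{x},y;\bm{\phi}')}\,dy \quad\text{for all }\bm{x}
\]
implies $\bm{\phi}=\bm{\phi}'$; note that this equality is the same as matching $P(\delta=1\mid\bm{x})$ across the two parameter values.

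Next I would set $w(y;\bm{u})=1/\pi(\bm{x},y;\bm{\phi})-1/\pi(\bm{x},y;\bm{\phi}')$. By the exclusion part of (C1), encoded in the first equality of (C4), $\pi$ depends on $\bm{x}$ only through $\bm{u}$, so $w$ is a function of $(\bm{u},y)$ alone and is free of $\bm{z}$. The displayed equality then becomes the orthogonality condition $\int w(y;\bm{u})\,p(y\mid\bm{u},\bm{z},\delta=1)\,dy=0$ for every $\bm{u}$ and every $\bm{z}$. That $w$ does not involve $\bm{z}$ while the density $p(y\mid\bm{u},\bm{z},\delta=1)$ genuinely varies with $\bm{z}$ is precisely what the instrument buys us.

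The heart of the argument is the sign structure of $w$. Writing $f=1/\Psi$, which is continuous and strictly monotone by (C4), and abbreviating $a=h(\bm{u};\bm{\alpha})$, $b=g(\bm{u};\bm{\beta})$ with $a',b'$ for the primed parameters, we have $w=f(a+bm(y))-f(a'+b'm(y))$. Because $f$ is injective and the two arguments are affine in $m(y)$, the equation $a+bm(y)=a'+b'm(y)$ has no root, exactly one root, or holds identically; correspondingly $w$ has constant nonzero sign, changes sign exactly once, or vanishes identically. In the constant-sign case, $\int w\,p(y\mid\bm{u},\bm{z},\delta=1)\,dy\neq0$, contradicting orthogonality. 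For the single sign-change case I would invoke (C6): take $\bm{z}_1,\bm{z}_2$ with monotone likelihood ratio $r(y)=p(y\mid\bm{u},\bm{z}_1,\delta=1)/p(y\mid\bm{u},\bm{z}_2,\delta=1)$, let $y_0$ be the sign-change point of $w$, and combine the two orthogonality equations to obtain $\int w(y)\{r(y)-r(y_0)\}\,p(y\mid\bm{u},\bm{z}_2,\delta=1)\,dy=0$. Since both $w$ and $r-r(y_0)$ change sign at the same point $y_0$, their product keeps a constant sign and is nonzero on a set of positive probability, so the integral cannot vanish, a contradiction. Hence $w\equiv0$.

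Finally, $w\equiv0$ gives $f(a+bm(y))=f(a'+b'm(y))$ for all $y$, so by injectivity of $f$ the affine functions coincide on the nondegenerate range of $m$, forcing $h(\bm{u};\bm{\alpha})=h(\bm{u};\bm{\alpha}')$ and $g(\bm{u};\bm{\beta})=g(\bm{u};\bm{\beta}')$ for every $\bm{u}$; the injectivity of $h$ in $\bm{\alpha}$ and of $g$ in $\bm{\beta}$ from (C4) then yields $\bm{\alpha}=\bm{\alpha}'$ and $\bm{\beta}=\bm{\beta}'$, i.e. $\bm{\phi}=\bm{\phi}'$, completing the proof. I expect the main obstacle to be the single sign-change step: establishing that $w$ has at most one sign change and, above all, centering by the constant $r(y_0)$ so that the monotone likelihood ratio aligns the two sign changes and renders the product of constant sign. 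The reduction to the $\bm{\phi}$-only statement and the closing injectivity argument are routine by comparison.
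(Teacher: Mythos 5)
Your proof is correct and takes essentially the same route as the paper's: integrate the identity in $\varphi$ over $y$ to identify $\bm{\gamma}$ via (C5), then reduce to the orthogonality conditions and rule out a nonzero difference $w$ of reciprocal response probabilities by pairing its single sign change with the monotone likelihood ratio supplied by (C6). The only organizational difference is that your trichotomy (constant sign, one sign change, identically zero) handles $\alpha$ and $\beta$ in one stroke, with your centering constant $r(y_0)$ playing the role of the paper's bound $c=\sup_{y<y^*} p(y\mid \bm{u},\bm{z}_2,\delta=1)/p(y\mid \bm{u},\bm{z}_1,\delta=1)$, whereas the paper first identifies $\beta$ by the sign-change argument and then identifies $\alpha$ separately via strict monotonicity of the integral in $\alpha$.
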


We provide an example of outcome models satisfying the condition (C5).

\begin{example}[Model satisfying (C5)] \label{exponential family}
Let density functions in the exponential family be
\begin{align*}
    p(y\mid \bm{x}, \delta=1;\bm{\gamma})=\exp \left( \frac{y\theta -b(\theta)}{\tau} +c(y;\tau) \right),
\end{align*}
where $\theta=\theta(\eta)$, $\eta=\sum_{l=1}^{L} \eta_l(\bm{x})\kappa_l$, $\bm{\kappa}=(\kappa_1,\ldots,\kappa_L)^{\top}$, and $\bm{\gamma}=(\tau,\bm{\kappa}^{\top})^{\top}$. Then the density ratio becomes
\begin{align*}
    \frac{p(y\mid \bm{u},\bm{z}_1, \delta=1)}{p(y\mid \bm{u},\bm{z}_2, \delta=1)}\propto \exp \left( \frac{\theta_1-\theta_2}{\tau} y \right),
\end{align*}
where $\bm{x}_i=(\bm{u},\bm{z}_i)$ and $\theta_i=\theta\{\sum_{l=1}^{L} \eta_l(\bm{x}_i)\kappa_l\},\ i=1,2$.
Therefore, the density ratio is monotone.
\end{example}

\begin{example}[Model satisfying (C6)] \label{satisfying_C3}
In application, it is often reasonable to assume a normal distribution on the respondents’ outcome model. Focusing on the tail of the outcome model, we provide a sufficient condition to check (C6) for models with general response mechanisms.
\end{example}

\begin{proposition}\label{cor_normal}
Suppose that the observed distribution $p(y\mid \bm{x}, \delta=1)$ is normal distribution $N(\mu(\bm{x};\bm{\kappa}),\sigma^2)$, the response mechanism is (\ref{response}) with $m(y)=y$ and $g(\bm{u};\bm{\beta})=\beta$, and the strictly monotone increasing function $\Psi$ meets the following condition:
\begin{align}
    ^{\exists}s\in (0,2)\ \mathrm{s.t.} \ \liminf_{z\to-\infty} \Psi(z) \exp(|z|^s)>0 \label{additional condition}.
\end{align}
Then, this model satisfies (C6).
\end{proposition}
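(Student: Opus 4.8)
The plan is to bound the integrand pointwise and reduce the finiteness of the integral in (C7) to a comparison between Gaussian decay and the sub-Gaussian growth rate permitted by (\ref{additional condition}). Writing out (C7) with the stated specialization, I would first observe that the integral equals
\begin{align*}
\int_{-\infty}^{\infty} \frac{1}{\sqrt{2\pi}\,\sigma}\exp\!\left(-\frac{(y-\mu)^2}{2\sigma^2}\right)\frac{1}{\Psi(h+\beta y)}\,dy,
\end{align*}
where I abbreviate $\mu=\mu(\bm{x};\bm{\kappa})$ and $h=h(\bm{u};\bm{\alpha})$, both fixed once the parameters and the (a.s.\ finite) covariate value are fixed. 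Since $\Psi:\mathbb{R}\to(0,1]$ is strictly increasing, the only way the reciprocal $1/\Psi$ can blow up is along the tail where $h+\beta y\to-\infty$; on the complementary half-line $h+\beta y\ge -M$ the factor $1/\Psi(h+\beta y)$ is bounded by the constant $1/\Psi(-M)$, so the integrand there is dominated by a constant multiple of the normal density and contributes a finite amount. The case $\beta=0$ is immediate, since the denominator is then the positive constant $\Psi(h)$.

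For the problematic half-line I would invoke (\ref{additional condition}) to extract a threshold. Because $\liminf_{z\to-\infty}\Psi(z)\exp(|z|^s)>0$, there exist constants $c>0$ and $M>0$ with $\Psi(z)\ge c\exp(-|z|^s)$ for all $z<-M$, equivalently $1/\Psi(z)\le c^{-1}\exp(|z|^s)$. Substituting $z=h+\beta y$ (with $\beta\ne0$), the bad tail is $y\to-\infty$ when $\beta>0$ and $y\to+\infty$ when $\beta<0$; in either case, on the region $h+\beta y<-M$ the integrand is bounded above by
\begin{align*}
\frac{c^{-1}}{\sqrt{2\pi}\,\sigma}\exp\!\left(-\frac{(y-\mu)^2}{2\sigma^2}+|h+\beta y|^s\right).
\end{align*}

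The main point --- and the reason the exponent range $s\in(0,2)$ is exactly what is needed --- is that $|h+\beta y|^s$ grows like $|\beta|^s|y|^s=o(y^2)$ as $|y|\to\infty$, whereas the Gaussian term supplies quadratic decay $-(y-\mu)^2/(2\sigma^2)$. Hence the whole exponent tends to $-\infty$, and in fact for all sufficiently large $|y|$ it is bounded above by, say, $-y^2/(4\sigma^2)$, so the tail integrand is majorized by an integrable Gaussian-type function. Combining the bounded half-line with the possible bad tail gives finiteness of the integral for every fixed admissible value of $\bm{x}$, which is the a.s.\ assertion of (C7). The only delicate step is this tail domination, and it rests entirely on the strict inequality $s<2$: were $s=2$ admitted, both competing exponents would be quadratic and integrability would hinge on a precise comparison of $|\beta|^2$ against $1/(2\sigma^2)$, so the sub-quadratic restriction is precisely what makes the conclusion automatic.
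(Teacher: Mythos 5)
Your proof is correct and follows essentially the same route as the paper's: both split the integral at the threshold $-M$ supplied by (\ref{additional condition}), bound $1/\Psi$ by a constant multiple of $\exp(|z|^s)$ on the far-left tail and by $1/\Psi(-M)$ on the complementary region via monotonicity, and conclude because the sub-quadratic growth $|z|^s$ with $s<2$ is dominated by the Gaussian decay. The only differences are cosmetic: you work directly in $y$ where the paper substitutes $t=h(\bm{u};\bm{\alpha})+\beta y$ (hence the Gaussian with variance $\beta^2\sigma^2$ in its display), and you additionally treat the trivial case $\beta=0$, which the paper's change of variables tacitly excludes.
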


The condition (\ref{additional condition}) is easy to check. For example, it holds for Logistic and Robit functions but not for the Probit function.
According to Proposition \ref{cor_normal}, it is possible to estimate $\mu(x;\bm{\kappa})$ with observed data using splines and other nonparametric methods, which allows us to use very flexible models.
Furthermore, we can also estimate the response mechanism using nonparametric methods because it does not impose any restrictions on the functional form of $h(\bm{u};\bm{\alpha})$.

\section{Numerical experiment} \label{section4}
We present the effects of identifiability in numerical experiments by comparing weak and strong identifiable models. We prepared four Scenarios S1--S4:

\begin{itemize}
    \item[S1:] (Outcome: Normal, Response: Logistic)\\
    $[y\mid u,z,\delta=1]\sim N( \kappa_0+\kappa_1u+\kappa_2z,\sigma^2)$, $\text{logit}\{ P(\delta=1\mid u,y;\bm{\alpha},\beta)\}=\alpha_0+\alpha_1u+\beta y$, $u\sim N(0,1^2)$, and $z\sim B(1,0.5)$, where $(\kappa_0,\kappa_1,\sigma^2)^{\top}=( 0.3,0.4,1/{\sqrt{2}^2} )^{\top}$ and $(\alpha_0,\alpha_1,\beta)^{\top}=( 0.7,-0.2,0.29 )^{\top}$.
    \item[S2:] (Outcome: Normal, Response: Cauchy)\\
     $[y\mid u,z,\delta=1] \sim N(\kappa_0+\kappa_1u+\kappa_2z,\sigma^2 )$, $P(\delta=1\mid u,y;\bm{\alpha},\beta)=\Psi( \alpha_0+\alpha_1u+\beta y)$, $u\sim \mathrm{Unif}(-1,1)$, and $z\sim B(1,0.7)$, where $(\kappa_0,\kappa_1,\sigma^2)^{\top}=( -0.36,0.59,1/{\sqrt{2}^2} )^{\top}$, $(\alpha_0,\alpha_1,\beta)^{\top}=( 0.24,-0.1,0.42)^{\top}$, and $\Psi$ is the cumulative distribution function of the Cauchy distribution.
    \item[S3:] (Outcome: Bernoulli, Response: Probit)\\
    $[y\mid u,z,\delta=1] \sim B(1,p(u,z;\bm{\kappa})\} )$, $P(\delta=1\mid u,y;\bm{\alpha},\beta)=\Psi( \alpha_0+\alpha_1u+\beta y)$, $u\sim N(0,1^2)$, and $z\sim N(0,1^2)$, where $p(u,z;\bm{\kappa})=1/\{ 1+\exp( -\kappa_0-\kappa_1u -\kappa_2z )$, $(\kappa_0,\kappa_1,\kappa_2)^{\top}=( -0.21, 3.8, 1.0 )^{\top}$, $(\alpha_0,\alpha_1,\beta)^{\top}=( 0.4,0.39,0.3)^{\top}$, and $\Psi$ is the cumulative distribution function of the standard normal.
     \item[S4:] (Outcome: Normal+nonlinear mean structure, Response: Cauchy or Logistic)\\
     $[y\mid u,z,\delta=1]\sim N( \mu(\bm{x}),0.5^2 )$, $P(\delta=1\mid u,y;\bm{\alpha},\beta)=\Psi( \alpha_0+\alpha_1u+\beta y)$, $u\sim \mathrm{Unif}(-1,1)$, and $z\sim B(1,0.5)$, where $\mu(\bm{x})=z+\cos(2\pi u)+\exp(z+u)$, $(\alpha_0,\alpha_1,\beta)^{\top}=( 0.1,-0.2,0.3)^{\top}$, and $\Psi$ is the cumulative distribution function of the Cauchy or Logistic distribution. 
\end{itemize}

In S1 and S2, the strength of the identification can be adjusted by changing the parameter $\kappa_2$ because $\kappa_2=0$ indicates that the model is unidentifiable by Example \ref{unidentifiable model}.
On the other hand, we can verify that the models in S3 and S4 are identifiable by Theorem \ref{main}. For example, in S4, we can see that checking (C3) and (C4) is straightforward to the setting, while (C5) and (C6) hold from Example \ref{exponential family} and Proposition \ref{cor_normal}, respectively.
From S3 and S4, we can confirm the successful inference even in the case of discrete outcome and complex mean structures, respectively.

We generated 1,000 independent Monte Carlo samples and computed two estimators for $E[y]$ and $\beta$ with two methods: fractional imputation (FI) and complete case (CC) estimators, which use only completely observed data.
The estimator for $E[y]$ is computed by the standard inverse probability weighting method with estimated response models \citep{riddles2016propensity}. We used correctly specified models for Scenarios S1--S3 but used nonparametric models for Scenario S4 because it is unrealistic to assume that the complicated mean structure is known. The R package `crs' specialized in nonparametric spline regression on the mixture of categorical and continuous covariates \citep{nie2012crs} is used to estimate the respondents' outcome model.
Response models are estimated by using the method discussed in Section \ref{estimation}.

Bias, root mean squared error (RMSE), and coverage rate for 95\% confidence intervals in S1--S4 are reported in Table \ref{normal simulation}.
In all the Scenarios, CC estimators have a significant bias, and the coverage rates are far from 95\%, while FI estimators work well when the model is surely identifiable. When $\kappa_2$ is small in S1 and S2, the performance of variance estimation with FI is poor, as expected, although that of point estimates is acceptable. The results in S4 indicate that the model is identifiable even if we use a nonparametric mean structure, and the estimates are almost the same between the two response models.

\begin{table}[H] 
\centering
  \caption{Results of S1--S4: Bias, root mean square error (RMSE), and coverage rate (CR,\%) with $95\%$ confidence interval are reported.
  CC: complete case; FI: fractional imputation.}
  \label{normal simulation}
  \begin{tabular}{ccccrrr}  
  \cmidrule(lr){1-7}
     Scenario &Parameter & $\kappa_2$ & Method & Bias & RMSE & CR \\ 
     \cmidrule(lr){1-7}
     & &\raisebox{-1.5ex}[0ex][-2ex]{1.0} & CC & 0.053 &  0.066 & 73.5 \\ 
      &  & & FI &  0.000 & 0.043 & 95.4 \\
      &\raisebox{-1.5ex}[0ex][-1.5ex]{$E[y]$}  &\raisebox{-1.5ex}[0ex][-1ex]{0.5} & CC & 0.039 & 0.053 & 80.9 \\ 
      &  & & FI &  -0.001 & 0.059 & 97.1 \\ 
     S1  &  &\raisebox{-1.5ex}[0ex][-1ex]{0.1} & CC & 0.034 & 0.049 & 83.0 \\ 
      &  & & FI &  0.021 & 0.136 & 99.8 \\ 
      \cmidrule(lr){2-7}
      &   &1.0 & FI &  0.001 & 0.163 & 95.2 \\   
      &$\beta$   &0.5  & FI &  0.003 & 0.330 & 98.6 \\   
      &   &0.1 & FI &  -0.146 & 0.865 & 100 \\   
    \cmidrule(lr){1-7}
     & &\raisebox{-1.5ex}[0ex][-2ex]{1.0} & CC & 0.146 & 0.152 & 5.7 \\ 
      &  & & FI &  -0.004 & 0.051 & 94.8 \\
      &\raisebox{-1.5ex}[0ex][-1.5ex]{$E[y]$}  &\raisebox{-1.5ex}[0ex][-1ex]{0.5} & CC & 0.130 & 0.136 & 7.7 \\ 
      &  & & FI &  -0.008 & 0.086 & 86.2 \\ 
     S2  &  &\raisebox{-1.5ex}[0ex][-1ex]{0.1} & CC & 0.127 & 0.133 & 9.4 \\ 
      &  & & FI &  -0.007 & 0.105 & 92.4 \\ 
      \cmidrule(lr){2-7}
      &   &1.0 & FI &  0.008 & 0.148 & 95.4 \\   
      &$\beta$   &0.5  & FI &  0.044 & 0.365 & 100 \\   
      &   &0.1 & FI &  0.033 & 0.448 & 100 \\  
     \cmidrule(lr){1-7}
    &  \raisebox{-1.5ex}[0ex][-1ex]{$E[y]$}&-- & CC & 0.100 & 0.102 & 0.3 \\ 
   S3    &  &-- &FI & 0.001 & 0.022 & 95.3 \\ 
   \cmidrule(lr){2-7}
     & $\beta$ &--& FI &  -0.023 & 0.279 & 95.0 \\ 
      \cmidrule(lr){1-7}
    &  &-- & CC(Logistic) & 0.341 & 0.355 & 5.4 \\ 
       &  \raisebox{-1.5ex}[0ex][-1ex]{$E[y]$}&-- &FI(Logistic) & 0.005 & 0.079 & 95.4 \\ 
       &  &-- & CC(Cauchy) & 0.296 & 0.312 & 10.7 \\ 
   S4    &  &-- &FI(Cauchy) & 0.007 & 0.080  & 94.3 \\
   \cmidrule(lr){2-7}
     & \raisebox{-1.5ex}[0ex][-1ex]{$\beta$} &--& FI(Logistic) &  0.006 & 0.050 & 94.7 \\ 
     &  &--& FI(Cauchy) & 0.011  & 0.063 & 93.8 \\ 
     \cmidrule(lr){1-7}
  \end{tabular}
\end{table}

\section{Real data analysis} \label{section5}
We analyzed a dataset of 2139 HIV-positive patients enrolled in AIDS Clinical Trials Group Study 175 (ACTG175; \cite{hammer1996trial}).
In this analysis, we specify 532 patients for analysis who received zidovudine (ZDV) monotherapy.
Let each $y$, $x_1$, and $x_2$ be the CD4 cell count at $96\pm 5$ weeks, at the baseline, and at $20\pm 5$ weeks, $x_3$ be the CD8 cell count at the baseline, and $z$ be sex.
The outcome was subject to missingness with a 60.34\% observation rate, while all covariates were observed.
To make estimation stable and easy, we standardized all the data.
We expect that $z$ (sex) is a reasonable choice for an instrument variable because the information is a biological value, which affects the value of CD4, but has little effect on the response probability.

Patients who are suffering from a mild illness of HIV tend to have higher CD4 cell count; thus, one may consider that missingness of the outcome relates to serious conditions and may expect that the missing value of the outcome would be a lower CD4 cell count than the respondent. We therefore considered five different MNAR response models:
\begin{align*}
    P(\delta=1\mid x_1,x_2,x_3,y)=\Psi( \alpha_0+\alpha_1x_1+\alpha_2x_2+\alpha_3x_3+\beta y ),
\end{align*}
where $\Psi$ represents either the Logistic function or the distribution functions of the Cauchy or $t$ distribution with degrees of freedom $v\,(=2,5,10)$. Theorem \ref{main} and Proposition \ref{cor_normal} ensure that all the models with these five response models are identifiable, even when the instrumental variable $z$ is discrete.
From the above conjecture on missing values, the sign of $\beta$ is expected to be negative.
We assumed that the respondent's outcome is a normal distribution with a nonparametric mean structure and estimated by the `crs' R package as considered in Scenario S4 in Section \ref{section4}. The residual plots shown in Figure \ref{residual plot} and the computed $R^2$-value $(=0.453)$ signify the assumed distribution on the respondents' outcome fit well.
Table \ref{real data result} reports the estimated parameters and their estimated standard errors calculated by 1,000 bootstrap samples. The results of the five response models were almost similar. This suggests that the response mechanism is robust to the choice of response models.
Although we cannot determine whether it is MNAR or MAR because the estimated standard error for $\beta$ is large, the point estimate is negative, as we expected.
This result is consistent with the result in \cite{zhao2021sufficient}.

\begin{figure}[H]
    \centering
    \includegraphics[keepaspectratio, scale=0.5]{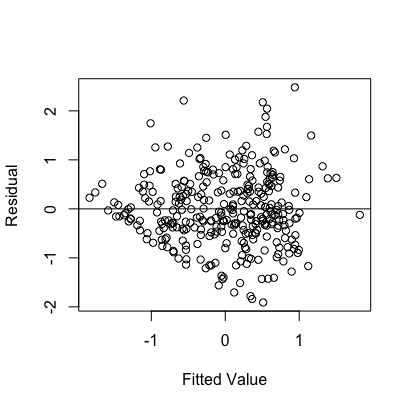}
    \caption{Residual plots of respondents' outcome in ACTG175 data.\label{residual plot}}
\end{figure}

\begin{table}[H] 
\centering
  \caption{Estimated parameters: Estimates and standard errors for the target parameters are reported. Logistic and Cauchy are Fractional Imputation using Logistic and Cauchy distributions for the response mechanism. $T_v$: $t$ distribution function with degrees of freedom $v\,(=2,5,10)$.\label{real data result}}
  \begin{tabular}{cccc|cccc}  
  \cmidrule(lr){1-8}
      Parameter   & Model & Estimate & SE & Parameter   & Model & Estimate & SE \\ 
     \cmidrule(lr){1-8}
        & Logistic & 0.464  & 0.104 & & Logistic& 0.125 & 0.156 \\
        & Cauchy & 0.417 & 0.260 &  &  Cauchy & 0.108 & 0.139 \\ 
      $\alpha_0$  & $T_2$ & 0.341 & 0.081 & $\alpha_1$  & $T_2$ & 0.091 & 0.113  \\
        & $T_5$ & 0.306 & 0.069 & & $T_5$ & 0.082 & 0.102 \\
        & $T_{10}$ & 0.295 & 0.066& & $T_{10}$ & 0.080 & 0.099 \\
      \cmidrule(lr){2-4}
      \cmidrule(lr){6-8}
         & Logistic & 0.255 & 0.192 & &Logistic & 0.093 & 0.107 \\
        & Cauchy & 0.244 & 0.207 &  & Cauchy & 0.083 & 0.097 \\ 
      $\alpha_2$  & $T_2$ & 0.196 & 0.148 & $\alpha_3$  & $T_2$ & 0.069 & 0.079  \\
        & $T_5$ & 0.169 & 0.126 & & $T_5$ & 0.062 & 0.070 \\ 
        & $T_{10}$ & 0.160 & 0.120& & $T_{10}$ & 0.060 & 0.068 \\
     \cmidrule(lr){2-4}
      \cmidrule(lr){6-8}
        & Logistic & -0.032 & 0.314 & &Logistic & 276.70 & 13.476 \\
        & Cauchy & -0.030 & 0.387 &  & Cauchy & 276.51 & 14.107 \\ 
      $\beta$  & $T_2$ & -0.027 & 0.235 & $E[y]$  & $T_2$ & 276.57 & 13.437  \\
        & $T_5$ & -0.021 & 0.203 & & $T_5$ & 276.61 & 13.271 \\ 
        & $T_{10}$ & -0.019 & 0.194& & $T_{10}$ & 276.63 & 13.217 \\
      \cmidrule(lr){1-8}
  \end{tabular}
\end{table}

\section{Conclusion} \label{section6}

In this paper, we proposed a new identification condition for models using respondents’ outcome and response models. Although our method requires the specification of the two models, the model can be very general with the help of an instrument.
As considered in Scenario S4 in Section \ref{section4}, the mean function in the  respondents' outcome model can be nonparametric, and the response model can be any strictly monotone function, other than Logistic models.
Our condition guarantees model identifiability even when instruments are categorical, which is not covered by previous conditions. Another advantage of using our method is the identification condition is easy to verify with observed data.

However, our method has some limitations. First, respondents' outcome models need to have the monotone-likelihood property by Condition (C5). For example, we cannot deal with mixture models in our framework. Second, the specification of instruments is necessary in advance. To date, some studies on finding the instruments have been proposed \citep{zhao2021sufficient}, but there are still no gold standard methods.

\section*{Funding}
Research by the second author was supported by MEXT Project for Seismology toward Research Innovation with Data of Earthquake (STAR-E) Grant Number JPJ010217.

\bibliography{bibliography.bib}


\appendix

\section{Technical Proofs} \label{Appendix A}

We first provide a technical result to prove Theorem \ref{full nonpara}.
\begin{lemma} \label{appendix_lem}
Let $a$, $b$, and $c$ be any positive real numbers. Assume that $r_1$ and $r_2$ are positive real numbers satisfying
\begin{align}
- \frac{ ab}{a+b}<\frac{r_1^2 - r_2^2}{r_1^2r_2^2} <c . \label{cond_abc}
\end{align}
Then, there exist $0<\pi^{(k)}_j<1\,(j=1,2,3; k=1,2)$ such that
\begin{align}
\sum_{j=1}^3 \pi^{(1)}_j = r_1^2,\quad \sum_{j=1}^3 \pi^{(2)}_j = r_2^2, \label{pi_scale}
\end{align}
and 
\begin{align}
\begin{split}
\frac{1}{\pi^{(1)}_1} - \frac{1}{\pi^{(2)}_1} = a, \quad  
\frac{1}{\pi^{(1)}_2} - \frac{1}{\pi^{(2)}_2} = b, \quad 
\frac{1}{\pi^{(1)}_3} - \frac{1}{\pi^{(2)}_3} = -c.
\end{split} \label{const_pi}
\end{align}
\end{lemma}
\begin{proof}[Proof of Lemma \ref{appendix_lem}]
By using a polar coordinate system, we transform $\pi^{(k)}_j\,(j=1,2,3;k=1,2)$ into
\begin{align*}
(\sqrt{\pi^{(1)}_1},\sqrt{\pi^{(1)}_2},\sqrt{\pi^{(1)}_3)}&=r_1(\sin \phi_1\cos \phi_2, \sin \phi_1\sin\phi_2, \cos \phi_1 ),\\
(\sqrt{\pi^{(2)}_1},\sqrt{\pi^{(2)}_2},\sqrt{\pi^{(2)}_3})&=r_2(\sin \psi_1\cos \psi_2, \sin\psi_1 \sin \psi_2, \cos\psi_1 ),
\end{align*}
where $0<\phi_1,\phi_2,\psi_1,\psi_2<\pi/2$ to ensure $\pi^{(k)}_j\,(j=1,2,3;k=1,2)$ satisfy \eqref{pi_scale}. It follows from \eqref{const_pi} and double-angular formulas that we have
\begin{align}
\begin{split}
&r_1^2(1-\omega_1)(1+\omega_2) - r_2^2(1-\omega_3)(1+\omega_4)\\
&= -\frac{ ar_1^2r_2^2}{4} (1-\omega_1)(1+\omega_2)(1-\omega_3)(1+\omega_4),
\end{split}\label{form1}\\
\begin{split}
&r_1^2(1-\omega_1)(1-\omega_2) - r_2^2(1-\omega_3)(1-\omega_4)\\
&= -\frac{ br_1^2r_2^2}{4} (1-\omega_1)(1-\omega_2)(1-\omega_3)(1-\omega_4),
\end{split}\label{form2} \\
&r_1^2(1+\omega_1) - r_2^2(1+\omega_3) = \frac{ cr_1^2r_2^2}{2} (1+\omega_1)(1+\omega_3),\label{form3}
\end{align}
where $\omega_1=\cos 2\phi_1,\omega_2=\cos2\phi_2,\omega_3=\cos 2\psi_1$, and $\omega_4=\cos2\psi_2$. Setting $\omega_2=\omega_4$ and equations \eqref{form1} and \eqref{form2} yield
\begin{align*}
r_1^2(1-\omega_1) - r_2^2(1-\omega_3)&= -\frac{ ar_1^2r_2^2}{4} (1-\omega_1)(1+\omega_2)(1-\omega_3), \\
r_1^2(1-\omega_1) - r_2^2(1-\omega_3) &= -\frac{ br_1^2r_2^2}{4} (1-\omega_1)(1-\omega_2)(1-\omega_3).
\end{align*}
Fixing $\omega_2=1-2a/(a+b)$ reduces the above equations to the one common equation
\begin{align}
r_1^2(1-\omega_1) - r_2^2(1-\omega_3) = -\frac{ r_1^2r_2^2ab}{2(a+b)} (1-\omega_1)(1-\omega_3), \label{form4}
\end{align}
maintaing the condition $-1<\omega_2<1$.
It remains to show that there exists $-1<\omega_3<1$ satisfying \eqref{form3} and \eqref{form4}. Solving the equation \eqref{form4} with respect to $\omega_1$, we have
\begin{align}
\omega_1=1- \frac{r_2^2(1-\omega_3)}{r_1^2 +  r_1^2r_2^2ab(1-\omega_3)/\{ 2(a+b) \}}.\label{form5}
\end{align}
Substituting \eqref{form5} into \eqref{form3} leads to the following quadratic equation with respect to $\omega_3$:
\begin{align*}
f(\omega_3)=&\lp \frac{  r_1^2r_2^4 ab +  c r_1^4r_2^4 ab }{2(a+b)} - \frac{ c r_1^2r_2^4}{2} \rp \omega_3^2 
-\lp \frac{  r_1^4 r_2^2 a b }{a+b} + c r_1^4r_2^2  \rp \omega_3 \\
&+\lp \frac{  r_1^2r_2^2 ab  \lp 2 r_1^2  -r_2^2  - c r_1^2 r_2^2 \rp }{2(a+b)} +\frac{  c r_1^2r_2^4 }{2}+ 2r_1^4 - 2r_1^2r_2^2 -  c r_1-4r_2^2  \rp=0.
\end{align*}
It follows from \eqref{cond_abc} that 
\begin{align*}
f(1)=r_1^2\lp 2r_1^2 - 2r_2^2-2  cr_1^2r_2^2\rp<0,\quad 
f(-1)=2r_1^2 \lp  r_1^2- r_2^2 + \frac{ r_1^2r_2^2ab}{ a+b}  \rp>0,
\end{align*}
which implies that there is at least one solution of $\omega_3$ to the equation $f(\omega_3)=0$ in the open interval $(-1,1)$ .

\end{proof}

Finally, we prove Theorem \ref{full nonpara} with the help of Lemma \ref{appendix_lem}.

\begin{proof}[Proof of Theorem \ref{full nonpara}]
Without loss of generality, we set the value of $\bm{u}$ to a fixed vector because the following proof holds for each $\bm{u}$. 
Let the categorical variables $y$ and $z$ take values in $\{ 1,2,\ldots, p \}$ and $\{ 1,2,\ldots, q \}$, respectively. We show that model identifiability implies the completeness condition (C2) by individually addressing three cases: (i) $p=2$, (ii) $p=3$, and (iii) $p\geq 4$ because ``if" part has been already established by Lemma \ref{miaoresult}.

When $p=2$, condition (C1) results in the rank of a $q \times 2$ matrix, composed of $p(y=j\mid \delta=1,z=i)$ in its $(i,j)$-th element ($i=1,2; j=1\dots, q$), being $2$. Hence, identifiable models always satisfy the completeness condition (C2).

For cases where $p\geq 3$, we must show that the model becomes unidentifiable when the completeness condition is violated. The breach of the completeness condition indicates the existence of a non-zero vector $(h_1, \dots, h_p)$ such that for $z=1,\ldots,q$, we have
\begin{align}
E[h_y\mid \delta=1,z]=\sum_{y=1}^{p} h_y p(y\mid \delta=1,z)=0. \label{complete_cond}
\end{align}
The elements in $(h_1,\cdots,h_p)$ do not all share the same sign, and multiplying this vector by any constant does not affect the above equation. Recall that the model's unidentifiability implies that $\pi^{(1)}_y\neq \pi^{(2)}_y$ exists for some $y\in\{1,\dots, p\}$, satisfying $\sum_{y=1}^{p} p(y\mid \delta=1,z)/\pi^{(1)}_y=\sum_{y=1}^{p} p(y\mid \delta=1,z)/\pi^{(2)}_y$. We now construct an unidentifiable model when the completeness condition is violated.

When $p=3$, without loss of generality, we assume $h_1>0$, $h_2>0$, and $h_3<0$ satisfying the condition $\sum_{y=1}^{3} h_y p(y\mid \delta=1,z)=0$ for all $z\in\{1,\dots, q\}$. Employing Lemma \ref{appendix_lem} with $a=h_1$, $b=h_2$, $c=-h_3$, and $r_1=r_2=1$, we derive:

\begin{align*}
\begin{split}
\frac{1}{\pi^{(1)}_1} - \frac{1}{\pi^{(2)}_1} = h_1, \quad
\frac{1}{\pi^{(1)}_2} - \frac{1}{\pi^{(2)}_2} = h_2, \quad
\frac{1}{\pi^{(1)}_3} - \frac{1}{\pi^{(2)}_3} = h_3,
\end{split}
\end{align*}
where $\sum_{j=1}^3 \pi^{(1)}_j =\sum_{j=1}^3 \pi^{(2)}_j= 1$. Substituting $h_1$, $h_2$, and $h_3$ into $\sum_{y=1}^{3} h_y p(y\mid \delta=1,z)=0$ shows that the model is unidentifiable.

Lastly, we consider the case of $p\geq 4$. Suppose $h_y\,(y=1,\dots,p)$ satisfies \eqref{complete_cond}. Within $(h_1,\cdots,h_{p})$, we select three elements with signs as positive, positive, and negative, respectively, and define them as $ a$, $ b$, and $- c$ where $a,b,c>0$, and $\lambda$ is set to be sufficiently large to ensure that 
\begin{align}
\lambda > 2 \max \left\{ \frac{a+b}{ab},~\frac{1}{c} \right\}\label{cond_lam}.
\end{align}
For ease of notation, we denote $(h_1,\cdots,h_{p})=(h_1,\cdots,h_{p-3},a, b,- c)$. The remaining part of the proof is similar when the combination of the signs is negative, negative, and positive.
With the selected $\lambda$, $0<\pi_y^{(k)}<1\,(y=1,\ldots, p-3; k=1,2)$ are determined to be sufficiently small to satisfy
\begin{align}
&\left( 1- \sum_{y=1}^{p-3}\pi_y^{(1)} \right) \left( 1- \sum_{y=1}^{p-3}\pi_y^{(2)} \right)\geq \frac{1}{2},\quad \sum_{y=1}^{p-3}\pi_y^{(1)}<1, \quad  \sum_{y=1}^{p-3}\pi_y^{(2)}<1,\label{cond_r1r2}\\
&\frac{1}{\pi^{(1)}_y} - \frac{1}{\pi^{(2)}_y} = \lambda h_y,\quad \mathrm{for}~ y=1,\ldots,p-3\notag.
\end{align}
Furthermore, we define $r_1$ and $r_2$ as
\begin{align}
r_1^2=1- \sum_{y=1}^{p-3}\pi_y^{(1)},\ r_2^2=1- \sum_{y=1}^{p-3}\pi_y^{(2)}\label{def_r1r2}.
\end{align}

By determining the variables through these steps, it follows from \eqref{cond_lam}, \eqref{cond_r1r2}, and \eqref{def_r1r2} that condition \eqref{cond_abc} with $a=\lambda a$, $b=\lambda b$, and $c=\lambda c$ is fulfilled:
\begin{align*}
    &\frac{r_1^2 - r_2^2}{r_1^2r_2^2}\leq 2(r_1^2 - r_2^2)\leq \frac{2}{c} c<(\lambda c),\\
    & - \frac{ (\lambda a)(\lambda b)}{(\lambda a)+(\lambda b)}<
   - \frac{ab}{a+b}\frac{2(a+b)}{ab}=-2r_1^2r_2^2 \frac{1}{r_1^2r_2^2}\leq -\frac{1}{r_1^2r_2^2}    
    <\frac{r_1^2 - r_2^2}{r_1^2r_2^2}.
\end{align*}
Therefore, by applying Lemma \ref{appendix_lem}, we demonstrate that there exist $\pi^{(k)}_{p-2}$, $\pi^{(k)}_{p-1}$, and $\pi^{(k)}_{p}\,(k=1,2)$ such that
\begin{align*}
&\sum_{y=p-2}^{p}\pi_y^{(1)}=r_1^2,\ \sum_{y=p-2}^{p}\pi_y^{(2)}=r_2^2, \\
&\frac{1}{\pi_{p-2}^{(1)}}-\frac{1}{\pi_{p-2}^{(2)}}=\lambda a,\quad  \frac{1}{\pi_{p-1}^{(1)}}-\frac{1}{\pi_{p-1}^{(2)}}=\lambda b,\quad  \frac{1}{\pi_p^{(1)}}-\frac{1}{\pi_p^{(2)}}=-\lambda c.
\end{align*}
The condition \eqref{complete_cond} suggests that the constructed $\pi^{(k)}_y\,(y=1,\dots, p;k=1,2)$ satisfy $\sum_{y=1}^p \pi^{(k)}_y=1$ for $k=1,2$ and, for any $z\in\{1,\dots, q\}$,
$$\sum_{y=1}^{p}\lp \frac{1}{\pi^{(1)}_y}- \frac{1}{\pi^{(2)}_y}\rp p(y\mid \delta=1,z) = \lambda \sum_{y=1}^p h_y  p(y\mid \delta=1,z)=0.$$
Therefore, the model is unidentifiable.

\end{proof}

\begin{proof}[Proof of Theorem \ref{main}]
We consider when $y$ is continuous because when $y$ is discrete, we just need to change the integral to summation.
To simplify the discussion, we consider the case where $\mathcal{S}_y=\mathbb{R}$.
Let $\bm{u}$ be a fixed value.
Because $h$ and $g$ are injective functions, it is sufficient to prove the case where $\alpha:=h(\bm{u};\bm{\alpha})$ and $\beta:=g(\bm{u};\bm{\beta})$. Therefore, our goal is to prove
\begin{align*}
    \frac{ p(y\mid \bm{x}, \delta=1;\bm{\gamma}) }{\int p(y\mid \bm{x}, \delta=1;\bm{\gamma})  \Psi\{ \alpha + \beta m(y) \} ^{-1} dy }
    =\frac{ p(y\mid \bm{x}, \delta=1;\bm{\gamma^{\prime}}) }{\int p(y\mid \bm{x}, \delta=1;\bm{\gamma^{\prime}})  \Psi\{ \alpha^{\prime} + \beta^{\prime} m(y) \} ^{-1} dy },
\end{align*}
implies $\alpha=\alpha^{\prime}$, $\beta=\beta^{\prime}$ and $\bm{\gamma}=\bm{\gamma^{\prime}}$. Integrating both sides of the above equation with respect to $y$ yields the equality of the denominator. Thus, we have $p(y\mid \bm{x}, \delta=1;\bm{\gamma})=p(y\mid \bm{x}, \delta=1;\bm{\gamma^{\prime}})$; this implies $\bm{\gamma}=\bm{\gamma^{\prime}}$ by (C4).

Next, we consider the identification of $\beta$. Taking $\bm{z}_1$ and $\bm{z}_2$ such that they satisfy (C5), we show that
\begin{align}
   \int \frac{p(y\mid \bm{u}, \bm{z}_1, \delta=1;\bm{\gamma})}{\Psi\{ \alpha + \beta m(y) \}} dy&=\int \frac{p(y\mid \bm{u}, \bm{z}_1, \delta=1;\bm{\gamma})}{\Psi\{ \alpha^{\prime} + \beta^{\prime} m(y) \}} dy, \label{A1}\\
   \int \frac{p(y\mid \bm{u}, \bm{z}_2, \delta=1;\bm{\gamma})}{\Psi\{ \alpha + \beta m(y) \}} dy&=\int \frac{p(y\mid \bm{u}, \bm{z}_2, \delta=1;\bm{\gamma})}{\Psi\{ \alpha^{\prime} + \beta^{\prime} m(y) \}} dy,\label{A2}
\end{align}
implies $\beta=\beta^{\prime}$. It follows from (\ref{A1}) and (\ref{A2}) that
\begin{align}
    &\int K(y;\alpha,\alpha^{\prime},\beta, \beta^{\prime}) p(y\mid \bm{u}, \bm{z}_1, \delta=1;\bm{\gamma}) dy \notag\\
    &=\int K(y;\alpha,\alpha^{\prime},\beta, \beta^{\prime}) p(y\mid \bm{u}, \bm{z}_2, \delta=1;\bm{\gamma}) dy=0,\label{A3}
\end{align}
where $ K(y;\alpha,\alpha^{\prime},\beta, \beta^{\prime})=\Psi^{-1}\{ \alpha + \beta m(y) \} -\Psi^{-1}\{ \alpha^{\prime} + \beta^{\prime} m(y) \}$.
It remains to show that (\ref{A3}) implies $\beta=\beta^{\prime}$ in the following two steps:

\hspace{-1em}Step $\mathrm{I}$. We prove that the function $K(y;\alpha,\alpha^{\prime},\beta, \beta^{\prime})$ has a single change of sign when $\beta\neq \beta^{\prime}$.
Assume that $\beta\neq \beta^{\prime}$. The equation $K(y;\alpha,\alpha^{\prime},\beta, \beta^{\prime})=0$ has only one solution $y^*\in\mathcal{S}_y$ satisfying $m(y^*)=(\alpha-\alpha^{\prime})/(\beta^{\prime}-\beta)$ because of the injectivity of the function $m(\cdot)$ and $\Psi(\cdot)$. This implies $K(y)$ has a single change of sign.

\hspace{-1em}Step $\mathrm{I}\hspace{-1.2pt}\mathrm{I}$. We prove that the equation (\ref{A3}) does not hold when $\beta=\beta^{\prime}$. Without loss of generality, by Step $\mathrm{I}$, we consider a case where $K(y;\alpha,\alpha^{\prime},\beta, \beta^{\prime})<0\ (y<y^*)$ and $K(y;\alpha,\alpha^{\prime},\beta, \beta^{\prime})>0\ (y>y^*)$, and $p(y\mid \bm{u},\bm{z}_2, \delta=1)/ p(y\mid \bm{u},\bm{z}_1, \delta=1)$ is monotone increasing. Let $c$ be the upper bound of the density ratio
\begin{align*}
    c:=\sup_{y<y^*} \frac{  p(y\mid \bm{u},\bm{z}_2, \delta=1) }{p(y\mid \bm{u},\bm{z}_1, \delta=1)}.
\end{align*}
By a property on $K(y;\alpha,\alpha^{\prime},\beta, \beta^{\prime})$ shown in (\ref{A3}), we have
\begin{align*}
    0&=\int K(y;\alpha,\alpha^{\prime},\beta, \beta^{\prime}) p(y\mid \bm{u}, \bm{z}_2, \delta=1) dy\\
    &=\int_{-\infty}^{y^*} K(y;\alpha,\alpha^{\prime},\beta, \beta^{\prime}) \frac{p(y\mid \bm{u}, \bm{z}_2, \delta=1)}{p(y\mid \bm{u},\bm{z}_1, \delta=1)}p(y\mid \bm{u},\bm{z}_1, \delta=1) dy\\
    &\hspace{2em}+\int_{y^*}^{\infty} K(y;\alpha,\alpha^{\prime},\beta, \beta^{\prime}) \frac{p(y\mid \bm{u}, \bm{z}_2, \delta=1)}{p(y\mid \bm{u},\bm{z}_1, \delta=1)}p(y\mid \bm{u},\bm{z}_1, \delta=1) dy\\
    &\geq \int_{-\infty}^{y^*} cK(y;\alpha,\alpha^{\prime},\beta, \beta^{\prime}) p(y\mid \bm{u},\bm{z}_1, \delta=1) dy+\int_{y^*}^{\infty}c K(y;\alpha,\alpha^{\prime},\beta, \beta^{\prime})p(y\mid \bm{u},\bm{z}_1, \delta=1) dy\\
    &=c\int K(y;\alpha,\alpha^{\prime},\beta, \beta^{\prime}) p(y\mid \bm{u}, \bm{z}_1, \delta=1) dy=0,
\end{align*}
where the inequality follows from the definition of $c$. This results in the density ratio $p(y\mid \bm{u},\bm{z}_2, \delta=1)/ p(y\mid \bm{u},\bm{z}_1, \delta=1)$ being a constant on $\mathcal{S}_y$, hence, $p(y\mid \bm{u},\bm{z}_2, \delta=1)= p(y\mid \bm{u},\bm{z}_1, \delta=1)$ on $\mathcal{S}_y$. This contradicts with (C5), thus $\beta=\beta^{\prime}$. 

Finally, from the strict monotonicity of $\Psi$, it follows that the integration
\begin{align*}
    \int \frac{p(y\mid \bm{u}, \bm{z}_1, \delta=1;\bm{\gamma})}{\Psi\{ \alpha + \beta m(y) \}} dy,
\end{align*}
is injective with respect to $\alpha$. Therefore, equation (\ref{A1}) implies that $\alpha=\alpha^{\prime}$.

\end{proof}

\begin{proof}[Proof of Proposition \ref{cor_normal}]
It follows from the assumption (\ref{additional condition}) that there exist $M,C>0$ such that
\begin{align*}
&\int \frac{ p(y\mid \bm{x}, \delta=1;\bm{\gamma})}{ \Psi\{ h(\bm{u};\bm{\alpha}) + g(\bm{u};\bm{\beta}) m(y) \}} dy\\
&\propto \int_{-\infty}^{\infty} \exp \left\{ -\frac{1}{2}\frac{(y-h(\bm{u};\bm{\alpha})-\beta\mu(\bm{x},\bm{\kappa}) )^2}{\beta^2\sigma^2} \right\} \frac{1}{\Psi(y)\exp(|y|^s)}\exp(|y|^s) dy\\
&\leq \int_{-\infty}^{-M} \exp \left\{ -\frac{1}{2}\frac{(y-h(\bm{u};\bm{\alpha})-\beta\mu(\bm{x},\bm{\kappa}) )^2}{\beta^2\sigma^2} \right\} C\exp(|y|^s) dy +C<\infty,
\end{align*}
where $0<s<2$.
The first and the second terms of the last equation hold by the condition (\ref{additional condition}) and the increasing assumption of $\Psi$, respectively.
\end{proof}

\end{document}